\documentclass[11pt,a4paper]{article}
\usepackage[left=24mm,right=24mm,top=25mm,bottom=25mm,headheight=14pt,headsep=8mm]{geometry}
\usepackage{iftex}
\ifPDFTeX
  \usepackage[T1]{fontenc}
  \usepackage[utf8]{inputenc}
  \usepackage{lmodern}
\else
  \usepackage{fontspec}
\fi
\usepackage{amsmath,amssymb,amsthm,mathtools}
\usepackage{booktabs,tabularx,array,graphicx}
\usepackage{enumitem,fancyhdr,caption,titlesec,placeins}
\usepackage[protrusion=true,expansion=false]{microtype}
\usepackage[hidelinks,unicode]{hyperref}
\hypersetup{pdftitle={Capacity of a Mixed Multiple-Access Channel with Causal Observation},pdfauthor={Xianwei Meng}}
\setlist{nosep,leftmargin=2em}

\titleformat{\section}{\normalfont\large\bfseries}{\thesection.}{0.65em}{}
\titleformat{\subsection}{\normalfont\normalsize\bfseries}{\thesubsection.}{0.65em}{}
\titlespacing*{\section}{0pt}{2.2ex plus .5ex minus .2ex}{1ex plus .2ex}
\titlespacing*{\subsection}{0pt}{1.8ex plus .4ex minus .2ex}{.7ex plus .2ex}
\theoremstyle{plain}
\newtheorem{theorem}{Theorem}
\newtheorem{proposition}[theorem]{Proposition}
\theoremstyle{definition}
\newtheorem{definition}[theorem]{Definition}
\newcommand{\Prb}{\mathbb P}
\newcommand{\E}{\mathbb E}
\newcommand{\F}{\mathbb F}
\newcommand{\ind}{\mathbf 1}
\newcommand{\pg}{p_{\mathrm g}}
\newcommand{\pb}{p_{\mathrm b}}
\newcommand{\Cbar}{\overline p}
\newcommand{\Rop}{\mathcal R_{\mathrm{op}}}
\begin{document}
\thispagestyle{plain}
\begin{center}
{\fontsize{18}{22}\selectfont\bfseries
Capacity of a Mixed Multiple-Access Channel\\[2pt]
with Causal Observation\par}
\vspace{10pt}
{\normalsize Xianwei Meng\textsuperscript{*}\par}
\vspace{3pt}
{\small Hefei University of Technology\par}
\vspace{3pt}
{\small\textsuperscript{*}Corresponding author: \href{mailto:mxianwei@hfut.edu.cn}{mxianwei@hfut.edu.cn}\par}
\end{center}
\vspace{2pt}
\begin{abstract}
\noindent
We determine the capacity region of a mixed binary multiple-access channel whose receiver can choose how each observation is formed. Two ports reveal the sum modulo two of the transmitted symbols with retention probabilities $\pg>\pb$; their roles are interchanged by an unknown state that remains fixed during the block. With one observation per channel use, no transmitter feedback, and zero switching cost, the region is $R_1+R_2\le\pg$ for every fixed error threshold below one, and a strong converse holds. For error thresholds below one half, the largest open-loop sum rate is $(\pg+\pb)/2$, whereas a fixed port permits only $\pb$. A causal receiver attains the larger region by learning the state from erasure flags while using every channel use for data. A posterior port rule makes only finitely many incorrect selections almost surely, with a uniformly bounded expected number of mistakes. Exact random linear-code formulas and output-counting converses quantify the finite-block consequences. At blocklength 256 with $(\pg,\pb)=(0.9,0.4)$, independent numerical validation supports 214 message bits at a one-percent ensemble failure target, compared with 148 bits for the best open-loop allocation in the same code ensemble. The resulting 44.59\% increase is achieved with the same transmission and observation budgets. The analysis exhibits an explicit capacity gain from controlling the observation axis and separates that gain from information lost after a record has already been formed.
\end{abstract}
\begingroup\small
\noindent\textbf{Keywords:} mixed multiple-access channel; causal observation; physical--observation dual-axis structure; strong converse; finite-blocklength coding.\par
\endgroup

\section{Introduction}

In a mixed channel, a random component is selected at the beginning of a block and remains in force throughout transmission. A high average quality need not imply a high reliable rate: a component of positive probability may remain unfavorable throughout the block. In a multiple-access channel, this obstruction acts on the individual rates and on their sum. Han and Yagi~\cite{HanYagi2026} study information-spectrum descriptions of mixed multiple-access channels with cost constraints and different forms of state information. Here we ask what changes when the receiver can use its past observations to choose the channel through which the next record is obtained.

The physical--observation dual-axis formulation~\cite{MengDual2026} distinguishes the physical state from the observation state in the formation of a record. In a deterministic model the record is $Y=h(X,S)$; in a stochastic model it is described by a kernel $K(\mathrm dy\mid x,s)$. Fixing $S$ specifies an ordinary channel. Allowing $S$ to depend causally on the receiver's history gives a family of channels induced by observation policies. The distinction is operational: the encoder chooses the transmitted signal, while the receiver chooses which admissible record to acquire. Both choices enter the probability law on which a coding theorem must be based.

Action-dependent acquisition of state information has a well-developed information-theoretic formulation, including the probing-capacity problem of Asnani, Permuter, and Weissman~\cite{Asnani2010}. We consider a receiver-controlled example in which the observation law, capacity, learning cost, and finite-block decoder can all be analyzed explicitly. Two users transmit binary symbols. At each channel use the receiver reads one of two ports, obtaining either their sum modulo two or an erasure. An unknown binary block state interchanges the good and bad ports. Fixed observation, predetermined scanning, pilot-assisted selection, and continual posterior selection are thus protocols on the same physical--observation kernel.

The main result is an exact separation between these observation classes. At any error threshold below one half, their sum capacities are $\pb$, $(\pg+\pb)/2$, and $\pg$ for fixed, optimally allocated open-loop, and causal observation, respectively. The causal region satisfies a strong converse for every error threshold below one. Its proof uses a receiver that observes the good port and can simulate any admissible causal policy by adding erasures. This simulation preserves the causal order and relies on the absence of transmitter feedback. Achievability follows from binary linear coding after a vanishing fraction of pilot observations.

Dedicated pilots are not necessary. An erasure flag identifies whether a record was retained without revealing its value, and hence supplies state information during data transmission. We prove that a posterior port rule has a uniformly bounded expected number of incorrect selections and eventually uses the good port almost surely. A finite-horizon recursion then characterizes optimal observation within the class of policies that depend on erasure histories but not on the realized code matrix. Exact rank formulas and numerical comparisons show how the asymptotic gain appears at finite blocklength. Throughout, the performance gains refer to this specified channel and its resource constraints.

\section{Channel Model and Admissible Observation}

\subsection{Physical state and record formation}

The senders hold independent uniform messages $J_1$ and $J_2$ and use deterministic encoders $U_i^n=u_i^n(J_i)$ over $\F_2$. The state $H\in\{0,1\}$ is equiprobable, independent of the messages, and constant over the block. Its distribution and the channel parameters are known, but its realization is initially unknown. The physical and observation states at time $t$ are
\begin{equation}
 X_t=(U_{1t},U_{2t},H),\qquad S_t\in\{0,1\}.
 \label{eq:axes}
\end{equation}
The receiver reads only the selected port. Its output alphabet is $\mathcal Y=\{0,1,\bot\}$, where $\bot$ is an identifiable erasure. For $0<\pb<\pg<1$, put
\begin{align}
 p_{hs}&=\pg\ind\{s=h\}+\pb\ind\{s\ne h\},\label{eq:p}\\
 K(y\mid(u_1,u_2,h),s)
 &=p_{hs}\ind\{y=u_1\oplus u_2\}
 +(1-p_{hs})\ind\{y=\bot\}.\label{eq:kernel}
\end{align}
All physical--observation pairs are admissible. Conditional on the state and current action, a fresh independent random variable generates each erasure. Thus a change of observation state changes the chance of obtaining a record even when the physical state is held fixed. Information is measured in bits; $\ln$ denotes the natural logarithm.

\subsection{Causal policies and the induced channel}

Neither sender knows $H$ or receives feedback about the outputs or selected ports. The receiver may use its entire local history:
\begin{equation}
 S_t=\pi_t(S^{t-1},Y^{t-1},V),
 \label{eq:causal}
\end{equation}
where $V$ is a receiver seed independent of the messages and $H$. Deterministic policies are included. For the code and policy $\Pi$, the induced law is
\begin{equation}
 \Prb_{\Pi}(h,s^n,y^n\mid j_1,j_2,v)
 =\frac12\prod_{t=1}^n
 \ind\{s_t=\pi_t(s^{t-1},y^{t-1},v)\}
 K(y_t\mid(u_{1t}(j_1),u_{2t}(j_2),h),s_t).
 \label{eq:strategy-law}
\end{equation}
The decoder observes $R_n=(V,S^n,Y^n)$. Writing $J=(J_1,J_2)$, its channel law is therefore
\begin{equation}
 P_{R_n\mid J}(\mathrm dv,s^n,y^n\mid j)
 =P_V(\mathrm dv)\sum_{h=0}^1
 \Prb_{\Pi}(h,s^n,y^n\mid j,v).
 \label{eq:record-law}
\end{equation}
The hidden state is marginalized, while the receiver's visible randomness and control history remain part of the record.

If a delivered record $Z_n$ is obtained from $R_n$ by message-independent processing, the chain rule gives
\begin{equation}
 I(J;R_n)=I(J;Z_n)+I(J;R_n\mid Z_n).
 \label{eq:master}
\end{equation}
This identity measures the information lost in processing an existing record. Observation control acts earlier: it changes the law~\eqref{eq:record-law} under which the record is formed. Since the action is a function of information already available at the receiver,
\begin{equation}
 I(J;S_t\mid S^{t-1},Y^{t-1},V)=0.
 \label{eq:no-action-message}
\end{equation}
Its benefit comes through future observations. No additional message-bearing control link is present.

\subsection{Resources and error criterion}

The resource convention $\kappa$ fixes a block of $n$ channel uses. Each sender emits one binary symbol per use, and the receiver acquires exactly one sample. A transmitted symbol and an observation each have unit cost; port switching has zero cost. The per-block costs are consequently $n$ symbols per sender and $n$ observations. Pilot symbols count toward the same budgets. Message set sizes are fixed before transmission, and there is no early termination. These costs measure channel uses and samples; no physical energy model is assumed.

\begin{definition}[Operational region]
For $0\le\varepsilon<1$, a rate pair is achievable if a sequence of codes satisfying $\kappa$ has average joint decoding error with upper limit at most $\varepsilon$ and message rates with lower limits at least $R_1,R_2$. The closure of achievable pairs is denoted by $\Rop(\varepsilon;\kappa)$. The error probability averages over the messages, block state, channel randomness, and receiver seed. The case $\varepsilon=0$ is the vanishing-error criterion.
\end{definition}

\section{Capacity and the Cost of Predetermined Observation}

\begin{theorem}[Causal-observation capacity and strong converse]
For the channel~\eqref{eq:kernel} with the information pattern and resources above,
\begin{equation}
 \Rop(\varepsilon;\kappa)
 =\{(R_1,R_2):R_1\ge0,\ R_2\ge0,\ R_1+R_2\le\pg\},
 \qquad 0\le\varepsilon<1.
 \label{eq:capacity}
\end{equation}
For any sequence of codes whose sum rate exceeds $\pg$ by a fixed positive amount, the average error tends to one.
\end{theorem}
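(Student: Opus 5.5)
We prove the two inclusions separately: a strong converse via a genie receiver, then achievability via pilots followed by linear coding.

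\textbf{Converse.} We first reduce an arbitrary causal policy to a fixed-port erasure channel. Give the receiver a genie that always reads the good port $s=H$, and so obtains an erasure channel with retention $\pg$ on the sum $W_t=U_{1t}\oplus U_{2t}$. Any admissible policy $\Pi$ can be simulated from this genie record. At time $t$ the simulator computes $S_t=\pi_t(S^{t-1},Y^{t-1},V)$ from its own past simulated outputs. If $S_t=H$, it passes the genie output unchanged. If $S_t\neq H$, it erases a retained genie symbol independently with probability $1-\pb/\pg$. Because the senders receive no feedback, the codewords do not depend on past outputs, and because the erasures are fresh and conditionally independent, the simulated record $(V,S^n,Y^n)$ has exactly the law~\eqref{eq:strategy-law}. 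Consequently every causal code has error probability no smaller than that of some code for the genie channel $P_{\text{genie}}$, a memoryless erasure channel with input $W_t$ and extra known randomness $H,V$.

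Next we apply a strong converse on that channel. All information passes through $W^n$, so the sum rate is bounded by a point-to-point strong converse for the BEC$(1-\pg)$. We bound it by output counting. Fix the erasure pattern, with $N$ retained positions, where $N\sim\mathrm{Bin}(n,\pg)$. Given the pattern, the decoder sees at most $2^N$ distinct outputs. Its success probability is therefore at most $2^N/(M_1M_2)$, since the messages are uniform and $J\mapsto W^n$ may be non-injective, which only makes matters worse. Averaging over patterns gives
\[
P_{\mathrm c}\le \Prb\{N\ge n(\pg+\delta/2)\}+2^{n(\pg+\delta/2)-n(R_1+R_2)}.
\]
Both terms vanish when $R_1+R_2\ge\pg+\delta$: the first by Hoeffding, the second geometrically. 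The error therefore tends to one, which gives the strong converse and the outer bound for every $\varepsilon<1$.

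\textbf{Achievability.} It suffices to treat $\varepsilon=0$, since the region is closed and monotone in $\varepsilon$. Use $m=\lceil\sqrt n\rceil$ initial channel uses as pilots. The senders transmit fixed known symbols, for instance zeros; any fixed content works, since only the erasure flags matter. The receiver reads port $0$ for the first $m/2$ pilots and port $1$ for the rest. It estimates $H$ as the port with more retained outputs. By Hoeffding the misestimation probability is $e^{-\Omega(\sqrt n)}$. For the remaining $n-m$ uses it reads the estimated port. Each user is given a disjoint block of columns of a random binary linear code, with $M_1M_2=2^{(n-m)(\pg-\delta)}$. The pair $(J_1,J_2)$ is then a vector $x$ over $\F_2$, and $W^{n-m}=Gx$ with $G=[G_1\,G_2]$. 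On the good port, decoding fails only if the rows of $G$ at retained positions have rank below $k=k_1+k_2$. The probability of this is at most $2^{k-N'}$ averaged over the ensemble, which vanishes since $N'\approx(n-m)\pg$. Time sharing or an unequal split $(k_1,k_2)$ reaches every corner of the triangle, and $m/n\to0$. A deterministic code is extracted by the usual averaging argument. The error averaged over $H$ is at most the estimation error plus the rank failure probability, and both tend to zero.

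The main obstacle is the converse simulation. We must check that the simulated joint law equals~\eqref{eq:record-law} even though $S_t$ depends on past \emph{simulated} erasures rather than genie ones, and that giving $H$ to the genie decoder does not invalidate the bound. I would verify this by induction on $t$ using the product form~\eqref{eq:strategy-law}. The factorization $\pb=\pg\cdot(\pb/\pg)$ of the retention probability makes each step exact.
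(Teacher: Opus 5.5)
Your proposal is correct and follows the paper's proof essentially step for step: the same genie receiver that reads the good port and thins retained bits (erasing with probability $1-\pb/\pg$ when $S_t\ne H$ is the paper's keep-probability $p_{HS_t}/\pg$), the same conditional output-counting bound $1-P_e\le\E\min\{1,2^{B_n}/M\}$, and the same pilot-then-random-linear-coding achievability. The differences are cosmetic: you use $\lceil\sqrt n\rceil$ pilots instead of $2\lceil\ln n\rceil$ (both satisfy $m\to\infty$ and $m/n\to0$), and you should state explicitly, as the paper does, that a randomized decoder (or one using the extra randomness $H$, $V$ and the thinning coins) cannot beat the $2^{N}$ output count.
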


\begin{proof}
For the converse, give the receiver $H$ and a binary erasure channel that retains $U_{1t}\oplus U_{2t}$ with probability $\pg$ at every time. This receiver can reproduce any original policy. It first computes $S_t$ from the previously simulated history. If the enhanced channel retains the bit, it keeps that bit with probability $p_{HS_t}/\pg$ and otherwise erases it. Recursion in $t$ yields exactly~\eqref{eq:strategy-law}. The senders receive no feedback, so the additional erasures do not change their transmitted sequences.

Let $M=M_1M_2$ and let $B_n\sim\operatorname{Bin}(n,\pg)$ count retained positions in the enhanced channel. Conditional on their locations and on $H$, $r$ binary outputs distinguish at most $2^r$ joint messages. Decoder randomization cannot increase this count. Hence every original code and policy satisfy
\begin{equation}
 1-P_e\le\E\min\{1,2^{B_n}/M\}.
 \label{eq:converse}
\end{equation}
If $\log_2M\ge n(\pg+\delta)$, then
\[
 1-P_e\le
 \Prb\{B_n\ge\log_2M-n\delta/2\}+2^{-n\delta/2}\longrightarrow0.
\]
The side information $H$ is independent of the messages and adds no message-bearing output.

For achievability, use $m_n=2\lceil\ln n\rceil$ known pilot symbols, observing each port $m_n/2$ times, and select the port with more retained pilots. Section~\ref{sec:pilots} shows that its error probability tends to zero. Since $m_n/n\to0$, the number of retained data positions divided by $n$ converges in probability to $\pg$. Give the users independent random binary linear encoders with a total of $k_n$ message bits. For $k_n/n\to R_1+R_2<\pg$, the number of retained rows exceeds $k_n$ by a linear margin with probability tending to one. The rank formula~\eqref{eq:rank} then makes the ensemble failure probability tend to zero. For each blocklength a deterministic pair of matrices achieves at most the ensemble average. The bits may be split between the users in any nonnegative proportions, including the boundary cases $k_1=0$ or $k_2=0$. Taking the closure gives the boundary.
\end{proof}

\begin{theorem}[Fixed and open-loop observation]
Let $\Cbar=(\pg+\pb)/2$. For $0\le\varepsilon<1/2$, the largest open-loop sum rate is $\Cbar$, achieved by equal allocation to the two ports; a fixed port has sum capacity $\pb$. For $1/2\le\varepsilon<1$, a fixed port already has sum capacity $\pg$. In each case the corresponding region consists of nonnegative rate pairs below the stated sum constraint.
\label{thm:open-loop}
\end{theorem}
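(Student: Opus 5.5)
We first pin down what an open-loop policy does. Such a policy fixes the port sequence $s^n$ in advance, possibly drawn from the seed $V$. Given $s^n$ and $H=h$, the channel is a memoryless erasure channel that keeps $U_{1t}\oplus U_{2t}$ with probability $p_{hs_t}$. Let $a$ be the fraction of uses on port $0$. The expected retention fraction is then $\pg a+\pb(1-a)$ under $H=0$ and $\pg(1-a)+\pb a$ under $H=1$. Both retained counts concentrate around these means. The argument therefore reduces to the behaviour of a two-component mixed erasure channel with these two retention fractions, and the plan is to prove a matching converse and achievability in terms of them.

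\textbf{Converse.} We condition on $H=h$ and on $s^n$, and repeat the output-counting argument behind~\eqref{eq:converse}. With $B_n^{(h)}$ the number of retained positions,
\[
1-P_e\le\tfrac12\sum_{h}\E\min\{1,2^{B_n^{(h)}}/M\}.
\]
Seed randomization over $s^n$ is handled by averaging this bound. Suppose $\log_2M\ge n(\max_h r_h+\delta)$, where $r_h$ is the retention fraction under $H=h$. Then both terms vanish and the error tends to one. Suppose instead $\log_2M\ge n(\min_h r_h+\delta)$. Then the term belonging to the worse state vanishes, so $\liminf P_e\ge 1/2$.

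For $\varepsilon<1/2$ this forces the sum rate to be at most $\limsup\min_h r_h$. Now $\min_h r_h\le\tfrac12(r_0+r_1)=\Cbar$, with equality only when $a=1/2$. Under randomization we must also check that mixing over $s^n$ cannot beat the worst-state bound. We intend to use the concavity of $\min$ and work with $a$ through its expectation. Concentration of the retained counts has to hold uniformly over $s^n$, and Hoeffding's inequality for independent non-identical Bernoulli variables gives this. For a fixed port ($a\in\{0,1\}$) the same bound gives $\min_h r_h=\pb$.

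For $1/2\le\varepsilon<1$ the upper bound $\pg$ already follows from the causal theorem, since open-loop policies are causal. It also follows directly from the $\max_h$ bound above.

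\textbf{Achievability.} We use the random linear codes of the causal theorem without pilots. We alternate the ports, or for a fixed port keep $s_t\equiv0$. With $a=1/2$, the retained count in either state is $n\Cbar+o(n)$ with probability tending to one. Any total rate below $\Cbar$ then succeeds by the rank formula~\eqref{eq:rank}.

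For a fixed port at rate below $\pb$, both states succeed. For a fixed port with $\varepsilon\ge1/2$, take a rate below $\pg$. The code succeeds in the good state, which has probability $1/2$, so the limiting error is $1/2\le\varepsilon$. At $\varepsilon=1/2$ exactly we must confirm that the error limit is at most $1/2$ and not slightly above. The ensemble failure in the good state tends to zero, which gives this.

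The rate splits between the users and the closure are handled as before. The step we expect to be the main obstacle is the converse for seed-randomized port sequences. Those sequences can produce $a$ values that vary with $V$, so we must show that $\liminf P_e\ge1/2$ still follows uniformly whenever the rate exceeds $\Cbar$.
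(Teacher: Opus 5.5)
Your proposal is correct and follows the paper's route. The shared steps are: conditional output counting in each state; the identity $r_0+r_1=\pg+\pb$, which forces one state to have retention fraction at most $\Cbar$; concentration; the prior $1/2$ of that state; random linear codes with equal allocation; and the universal converse~\eqref{eq:converse} for $\varepsilon\ge1/2$. You use Hoeffding where the paper uses Chebyshev with $\operatorname{Var}(N\mid H)\le n/4$, which makes no difference.

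The one piece you leave open is the seed-randomized converse, and your planned fix is the wrong tool. That plan applies concavity of $\min$ to $\E a$, but the averaged allocation does not control the error. For example, a fair coin choosing port $0$ throughout or port $1$ throughout has $\E a=1/2$, yet every realization is a fixed port.

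The step is already closed by your own sentence ``Seed randomization over $s^n$ is handled by averaging this bound.''
\begin{itemize}
\item Since $V$ is independent of $H$, conditioning on $V$ (and hence on $s^n$) leaves $H$ uniform.
\item For every realization, $\min_h r_h(s^n)\le\Cbar$.
\item If $\log_2M\ge n(\Cbar+\delta)$, Hoeffding's bound is uniform over $s^n$. So the worse state for that realization has conditional success at most $e^{-n\delta^2/2}+2^{-n\delta/2}$.
\item Hence the success probability given $V$ is at most $\frac12+\frac12\bigl(e^{-n\delta^2/2}+2^{-n\delta/2}\bigr)$ for every $V$.
\end{itemize}
Averaging over $V$ gives $\liminf P_e\ge1/2$, with no assumption that the allocation converges. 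This is exactly the paper's remark that the bound is uniform in the allocation and survives conditioning on a message-independent randomized schedule.
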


\begin{proof}
If port 0 is used for a fraction approaching $\alpha$, the conditional retention fractions are
\[
 c_0(\alpha)=\alpha\pg+(1-\alpha)\pb,
 \qquad c_1(\alpha)=\alpha\pb+(1-\alpha)\pg.
\]
Their minimum is $\Cbar-|\alpha-1/2|(\pg-\pb)$ and is largest at $\alpha=1/2$.

The converse does not require the allocation fractions to converge. For any length-$n$ allocation, at least one state has expected retention count at most $n\Cbar$. If $\log_2M\ge n(\Cbar+\delta)$, the variance bound $\operatorname{Var}(N\mid H)\le n/4$ and conditional output counting give success probability at most
\[
 \frac{1}{n\delta^2}+2^{-n\delta/2}
\]
in that state. Its prior probability is one half, so $\liminf P_e\ge1/2$. The bound is uniform in the allocation and also holds after conditioning on a message-independent randomized schedule.

Equal allocation makes the retention fraction converge to $\Cbar$ in either state; random linear coding achieves every smaller sum rate. With a fixed port the two thresholds are $\pg$ and $\pb$. Below error one half, the bad state must be served. At error one half, reliable transmission in the good state suffices. The universal converse~\eqref{eq:converse} excludes rates above $\pg$. All boundary rates follow by closure.
\end{proof}

At error thresholds below one half, the increase over the best open-loop sum capacity is
\begin{equation}
 G_S=\pg-\Cbar=\frac{\pg-\pb}{2},
 \qquad \frac{G_S}{\Cbar}=\frac{\pg-\pb}{\pg+\pb}.
 \label{eq:gain}
\end{equation}
For $(\pg,\pb)=(0.9,0.4)$, this is a 38.46\% increase. The channel family and resource budgets are unchanged; only the permission to adapt observation differs. The gain vanishes when the ports have equal quality. It also vanishes as a capacity separation once an error threshold of one half allows the receiver to abandon one block state.

\subsection{The information spectrum of the record}

Take independent fair data symbols and use a fixed schedule, an open-loop schedule, or a data-period port chosen from known pilots. The pilots, actions, and erasure pattern are independent of the data. If $N_{\mathrm{obs}}$ counts retained data positions, the three multiple-access information densities are
\begin{equation}
 \imath(U_1^d;R_n\mid U_2^d)
 =\imath(U_2^d;R_n\mid U_1^d)
 =\imath(U_1^d,U_2^d;R_n)=N_{\mathrm{obs}}.
 \label{eq:spectrum-count}
\end{equation}
Each retained binary equation contributes a factor of two to the relevant likelihood ratio. This equality concerns the stated independent input distribution and message-independent observation pattern.

At a fixed port, $N_{\mathrm{obs}}/n$ converges to $\pg$ or $\pb$ according to the block state. Its mean tends to $\Cbar$, while its limit inferior in probability is $\pb$. Equal scanning instead gives the limit $\Cbar$ in both states. Causal pilot selection with $m_n\to\infty$ and $m_n/n\to0$ gives the limit $\pg$. These spectra agree with the operational results and explain why averaging the component information rates overestimates the fixed-port vanishing-error capacity.

\section{Pilot Selection and Finite-Block Coding}
\label{sec:pilots}

\subsection{Learning the good port}

Let $m=2\ell$ be the pilot length. Each port is read $\ell$ times, and the remaining $d=n-m$ uses carry data. If $C_0,C_1$ are the pilot retention counts, their likelihood ratio is
\begin{equation}
 \ln\frac{\Prb(C_0,C_1\mid H=0)}{\Prb(C_0,C_1\mid H=1)}
 =(C_0-C_1)\ln\frac{\pg(1-\pb)}{\pb(1-\pg)}.
 \label{eq:llr}
\end{equation}
The coefficient is positive. Equal priors therefore lead to the port with the larger count, with a fair decision at ties. For independent $A\sim\operatorname{Bin}(\ell,\pg)$ and $B\sim\operatorname{Bin}(\ell,\pb)$, the exact wrong-port probability is
\begin{equation}
 q_m=\Prb(A<B)+\tfrac12\Prb(A=B).
 \label{eq:q}
\end{equation}
A fixed tie decision gives the same state-averaged value. Applying Hoeffding's inequality to the sum of independent differences yields
\begin{equation}
 q_m\le\exp\!\left[-\frac{m(\pg-\pb)^2}{4}\right].
 \label{eq:hoeffding}
\end{equation}
Thus consistent selection requires a growing number of pilots, but their fraction of the block may vanish.

\begin{proposition}[Optimal selection after balanced pilots]
Fix the pilot allocation and require a single port throughout the data period. Suppose both choices use the same code and a port-independent decoder for which adding erasures cannot improve success. The maximum-posterior rule~\eqref{eq:llr} minimizes average block failure. Unique recovery by full-rank linear decoding satisfies these conditions.
\end{proposition}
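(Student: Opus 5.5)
The plan is to reduce the choice of port to a binary hypothesis test on $H$, and then show that picking the posterior-more-likely state maximizes success. Fix the pilot allocation, the code and the decoder. Let $\mathcal P=(C_0,C_1)$ denote the pilot record, or more generally the whole pilot history. Any selection rule maps $\mathcal P$, possibly with the seed $V$, to a port $\hat s\in\{0,1\}$ used throughout the data period. The pilots carry known symbols and their erasures are independent of the messages and of the data-period erasures given $H$. Therefore, conditional on $H=h$ and on $\mathcal P$, the data-period record has a law that depends only on $h$ and $\hat s$. Write $\sigma_{\mathrm g}$ for the success probability of the fixed code and decoder when every data position is retained independently with probability $\pg$, and $\sigma_{\mathrm b}$ for the same quantity with probability $\pb$. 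Because the decoder is port-independent, conditional success equals $\sigma_{\mathrm g}$ if $\hat s=h$ and $\sigma_{\mathrm b}$ otherwise. It does not depend on $h$ or on $\mathcal P$ beyond this.

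The key step is to show $\sigma_{\mathrm g}\ge\sigma_{\mathrm b}$. I would use the same thinning coupling as in the converse of the causal-observation theorem. Generate the $\pb$-pattern from the $\pg$-pattern by independently keeping each retained position with probability $\pb/\pg$. The $\pb$-record is then a further-erased version of the $\pg$-record. By the hypothesis that adding erasures cannot improve success, $\sigma_{\mathrm b}\le\sigma_{\mathrm g}$. If the hypothesis is only stated pathwise, I would apply it to each realization of the coupling and then take expectations.

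With this in hand, the overall success probability of a rule $\hat s(\mathcal P)$ is
\[
 \E\bigl[\sigma_{\mathrm b}+(\sigma_{\mathrm g}-\sigma_{\mathrm b})\,\Prb(H=\hat s\mid\mathcal P)\bigr].
\]
Since $\sigma_{\mathrm g}-\sigma_{\mathrm b}\ge0$, this is maximized pointwise by choosing $\hat s$ to maximize $\Prb(H=\hat s\mid\mathcal P)$. By \eqref{eq:llr} and the equal priors, that is the port with the larger retained count. Ties are immaterial, so the fair or any fixed tie decision is optimal. Randomized rules are convex combinations of deterministic ones and cannot do better. I would note that when the pilot history contains more than the counts, the likelihood ratio still depends only on $(C_0,C_1)$, because the positions of erasures within a port are exchangeable.

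Finally, I would verify that unique recovery by full-rank linear decoding satisfies the two conditions. It is port-independent, since it only uses the retained rows. Deleting a retained row can only lower the rank of the retained submatrix of the generator, so failure is monotone under added erasures.

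The main obstacle is the conditional-independence step. I must make sure that the data-period erasure pattern, given $H$ and $\hat s$, is independent of the pilot record. This holds because the kernel \eqref{eq:kernel} uses fresh erasure randomness at each use and the port is frozen after the pilots. The success expression then factors exactly as above.
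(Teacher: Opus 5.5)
Your proposal is correct and follows the paper's proof. In both, once you condition on the pilot counts, each port choice has failure (equivalently, success) affine in the posterior $w$, with sign fixed by $E_{\mathrm g}\le E_{\mathrm b}$; so the maximum-posterior choice is optimal, and \eqref{eq:llr} reduces it to a comparison of counts. Your thinning coupling (which justifies $E_{\mathrm g}\le E_{\mathrm b}$), your conditional-independence check and your sufficiency-of-counts remark make explicit steps that the paper's short argument leaves implicit.
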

\begin{proof}
Given the pilots, put $w=\Prb(H=0\mid C_0,C_1)$. Let $E_{\mathrm g}\le E_{\mathrm b}$ be the failure probabilities at the good and bad ports. The two choices have conditional failures $wE_{\mathrm g}+(1-w)E_{\mathrm b}$ and $(1-w)E_{\mathrm g}+wE_{\mathrm b}$. Their difference is $(2w-1)(E_{\mathrm g}-E_{\mathrm b})$. Its sign gives the stated rule, and~\eqref{eq:llr} reduces that rule to a comparison of counts.
\end{proof}

The dual-axis formulation makes the observation choice explicit; in this restricted decision problem, its optimum is the usual Bayesian choice. The capacity gain follows from allowing that choice to affect future records.

\subsection{Random linear codes and unique recovery}

User $i$ sends $k_i$ uniform message bits. With $k=k_1+k_2$, generate
\begin{equation}
 G_1\in\F_2^{d\times k_1},\qquad
 G_2\in\F_2^{d\times k_2},\qquad
 G=[G_1\ G_2],\qquad U_i^d=G_iJ_i,
\end{equation}
using independent fair entries. The matrices are fixed and public after code design; no fresh shared randomness or matrix transmission is required. For the retained position set $\mathcal A$, the receiver solves
\begin{equation}
 G_{\mathcal A}\begin{bmatrix}J_1\\J_2\end{bmatrix}=Y_{\mathcal A}.
\end{equation}
It outputs the unique solution when one exists and declares failure otherwise. We count an erasure declaration as a block failure. This criterion differs from a decoder that guesses among multiple solutions.

For fixed, open-loop, pilot-assisted, and erasure-history policies, $\mathcal A$ is independent of the messages. With $G$ fixed, failure depends only on $\mathcal A$, so the average and maximal message failure probabilities coincide; both still average over $H$. General policies that use received message values need not have this property.

\begin{proposition}[Exact ensemble failure]
Suppose $\mathcal A$ is independent of the random matrix generated at design time, and put $N=|\mathcal A|$. Conditional on $N=r$, the probability of full column rank is
\begin{equation}
 F(r,k)=
 \begin{cases}
 \displaystyle\prod_{j=0}^{k-1}(1-2^{j-r}),&r\ge k,\\
 0,&r<k,
 \end{cases}
 \qquad F(r,0)=1.
 \label{eq:rank}
\end{equation}
Consequently, for $f_r=\Prb(N=r)$,
\begin{equation}
 \overline P_{\mathrm f}=1-\sum_r f_r F(r,k).
 \label{eq:ensemble}
\end{equation}
\end{proposition}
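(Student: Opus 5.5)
The plan is to condition on the retained set $\mathcal A$ and reduce to a statement about a uniformly random binary matrix. Because $\mathcal A$ is independent of $G$ and the entries of $G$ are i.i.d.\ fair bits, conditioning on $\mathcal A=a$ with $|a|=r$ leaves $G_{a}$ as an $r\times k$ matrix with i.i.d.\ fair entries. The split $G=[G_1\ G_2]$ does not matter here, since the two blocks are independent with fair entries. The full-column-rank probability therefore depends only on $r$ and $k$; call it $F(r,k)$. Unique recovery holds exactly when $G_{\mathcal A}$ has full column rank. The reason is that $Y_{\mathcal A}=G_{\mathcal A}J$ is always consistent, so the solution set is the coset $J+\ker G_{\mathcal A}$, and this coset is a single point exactly when the kernel is trivial.

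To compute $F(r,k)$, I will expose the columns one at a time. Given that the first $j$ columns are linearly independent, they span a subspace of $\F_2^{r}$ with $2^{j}$ elements. A fresh uniform column lies in that span with probability $2^{j}/2^{r}=2^{j-r}$, independently of the earlier columns. Multiplying these conditional probabilities for $j=0,\dots,k-1$ gives the product in~\eqref{eq:rank}.

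The boundary cases follow directly. If $r<k$, then $k$ vectors in $\F_2^{r}$ cannot be independent, so $F(r,k)=0$; the product also vanishes automatically at $j=r$. If $k=0$, the empty product equals $1$, which matches the fact that the trivial message is always recovered. For~\eqref{eq:ensemble}, I will average over $\mathcal A$ using the law of total probability. Since $F$ depends only on $|\mathcal A|$, this gives $\sum_r f_r F(r,k)$ for the success probability, and subtracting from one gives $\overline P_{\mathrm f}$.

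The only delicate step is justifying that conditioning on $\mathcal A$ leaves the rows of $G$ indexed by $\mathcal A$ uniformly distributed. This is exactly where the hypothesis that $\mathcal A$ is independent of the matrix is used, and it is why policies that react to received values (and hence depend on $G$) are excluded. The remaining steps are routine.
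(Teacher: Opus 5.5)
Your proposal is correct and follows essentially the same route as the paper: condition on the retained set, expose the $r\times k$ retained matrix column by column so that each new uniform column lies outside the $2^j$-element span with probability $1-2^{j-r}$, multiply these factors, and average over $N$. You also make explicit two points the paper leaves implicit: unique recovery is equivalent to a trivial kernel of $G_{\mathcal A}$, and the independence of $\mathcal A$ from $G$ is exactly what keeps $G_{\mathcal A}$ uniform after conditioning.
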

\begin{proof}
Generate the $r\times k$ retained matrix by columns. After $j$ independent columns, their span contains $2^j$ vectors. The next uniform column lies outside it with probability $1-2^{j-r}$. Multiplication gives~\eqref{eq:rank}; averaging over $N$ gives~\eqref{eq:ensemble}.
\end{proof}

Writing $E(d,k,p)=1-\E F(B_{d,p},k)$ for $B_{d,p}\sim\operatorname{Bin}(d,p)$, the fixed-port, equally scanned, and pilot-assisted protocols have
\begin{align}
 \overline P_{\mathrm f}^{\mathrm{fix}}&=\tfrac12E(n,k,\pg)+\tfrac12E(n,k,\pb),\label{eq:fix}\\
 \overline P_{\mathrm f}^{\mathrm{scan}}&=1-\E F(B_{n/2,\pg}+B_{n/2,\pb},k),\label{eq:scan}\\
 \overline P_{\mathrm f}^{\mathrm{pilot}}&=(1-q_m)E(n-m,k,\pg)+q_mE(n-m,k,\pb),\label{eq:act}
\end{align}
where $n$ is even in~\eqref{eq:scan} and the two binomial variables are independent. A general open-loop allocation uses port 0 for $a$ positions and port 1 for $n-a$ positions. Its count law is the equal mixture of the two corresponding binomial convolutions. Ensemble performance depends only on $a$. For a fixed code matrix, changing the schedule order can change the retained rows; an equivalent operation must permute the code coordinates and schedule together.

Conditional output counting also gives, for any code carrying $k$ bits under a fixed, open-loop, or pilot-assisted observation protocol,
\begin{equation}
 P_e\ge1-\E\min\{1,2^{N-k}\},
 \label{eq:finite-lower}
\end{equation}
where $N$ counts retained data positions. The pilots and erasure pattern are message-independent, which permits the conditioning. Unlike~\eqref{eq:ensemble}, this converse is not restricted to random linear codes or unique-recovery decoding. The broader converse~\eqref{eq:converse} applies to every causal policy.

\subsection{Information retained on unsuccessful blocks}

For the message-independent policies above, the posterior given $G$ and the received equations is uniform on an affine solution space of dimension $k-\operatorname{rank}(G_{\mathcal A})$. Hence, for a fixed matrix,
\begin{equation}
 \begin{split}
 H(J_1,J_2\mid R_n,G)&=k-\E\operatorname{rank}(G_{\mathcal A}),\\
 I(J_1,J_2;R_n\mid G)&=\E\operatorname{rank}(G_{\mathcal A}).
 \end{split}
 \label{eq:rank-information}
\end{equation}
The expectation is over the actual observation process. Full rank determines whether the whole block is recovered, whereas expected rank also measures information retained on failed blocks.

For ensemble calculations, let $v_r(j)$ be the rank distribution of a fair $r\times k$ binary matrix. Starting from $v_0(0)=1$, with out-of-range terms zero,
\begin{equation}
 v_{r+1}(j)=v_r(j)2^{j-k}
 +v_r(j-1)(1-2^{j-1-k}).
 \label{eq:rank-recursion}
\end{equation}
An additional row belongs to a $j$-dimensional row space with probability $2^{j-k}$. The recursion gives $v_r(k)=F(r,k)$ and, after averaging over $N$,
\[
 \overline I=\sum_r f_r\sum_j jv_r(j),\qquad
 \overline H=k-\overline I.
\]

\section{Data-Aided Causal Observation}
\subsection{Causal port selection from erasure flags}

Let $E_t=\ind\{Y_t\ne\bot\}$ and let
\[
 N_t=\sum_{i=1}^t E_i
\]
count the received message equations. We consider causal port-selection rules that use the past ports and erasure flags, together with an independent local random seed $V$. The rule does not use the message bits or the entries of the randomly generated codebook. The channel parameters are known and satisfy $0<\pb<\pg<1$.

Write
\[
 w_t=\Prb(H=0\mid S^t,E^t,V),\qquad
 L_t=\ln\frac{w_t}{1-w_t},\qquad w_0=\frac12,
\]
and define the flag kernel by
\begin{equation}
 b_{hs}(e)=p_{hs}^{\,e}(1-p_{hs})^{1-e},\qquad e\in\{0,1\}.
 \label{eq:online-flag-kernel}
\end{equation}
At time $t+1$, the posterior rule selects port $0$ when $w_t>1/2$ and port $1$ when $w_t<1/2$; ties may be resolved deterministically or with independent randomization. The log odds obey
\begin{equation}
 L_t=L_{t-1}+(1-2S_t)
 \left[E_t\ln\frac{\pg}{\pb}
 +(1-E_t)\ln\frac{1-\pg}{1-\pb}\right].
 \label{eq:online-log-odds}
\end{equation}
Thus the flag supplies evidence about the hidden state, while each unerased value $U_{1t}\oplus U_{2t}$ supplies a message equation. Every slot can carry data, with one port read per slot.

Let $G$ denote the codebook generated before transmission and put $J=(J_1,J_2)$. Conditional on the independent seed $V=v$, the state, actions, and flags have joint law
\begin{equation}
 \Prb(h,s^n,e^n\mid G,J,V=v)
 =\frac12\prod_{t=1}^n
 \ind\{s_t=\pi_t(s^{t-1},e^{t-1},v)\}\,b_{hs_t}(e_t).
 \label{eq:online-law}
\end{equation}
The right-hand side does not depend on $G$ or $J$. In particular, the set of successful positions is independent of the codebook and messages, so the full-rank formula for random linear codes gives
\begin{equation}
 \overline P_{\mathrm f}^{\mathrm{online}}
 =1-\sum_{r=0}^{n}\Prb(N_n=r)F(r,k).
 \label{eq:online-rank-error}
\end{equation}
This is the exact codebook-averaged probability of failure to recover the message pair uniquely. In general, $N_n$ is not binomial, because the port choices depend on the preceding flags. Numerical evaluation of its distribution by sampling introduces Monte Carlo error into the evaluation of~\eqref{eq:online-rank-error}.

\subsection{Bounded learning loss and capacity attainment}

In the swapped-port model, both ports have the same affinity between their flag distributions under the two hidden states:
\begin{equation}
 \gamma=\sqrt{\pg\pb}+\sqrt{(1-\pg)(1-\pb)}<1.
 \label{eq:online-affinity}
\end{equation}
Every observation therefore contributes to state discrimination, whichever port is selected. This symmetry yields a learning loss whose expectation remains bounded as the block length grows.

\begin{theorem}[Learning loss and capacity attainment]
Suppose that the channel parameters are known, the hidden state is constant throughout the block, the erasure flags are observed, and switching has no cost. Let every slot carry data, and select the ports by the posterior rule associated with~\eqref{eq:online-log-odds}. If $q_t=\Prb(S_t\ne H)$, then
\begin{align}
 q_t&\le\frac12\gamma^{t-1},\label{eq:online-wrong-bound}\\
 \E\sum_{t=1}^{n}\ind\{S_t\ne H\}
 &\le\frac{1-\gamma^n}{2(1-\gamma)},\label{eq:online-mistakes}\\
 \E N_n&\ge n\pg-
 \frac{(\pg-\pb)(1-\gamma^n)}{2(1-\gamma)}.
 \label{eq:online-record-mean}
\end{align}
Moreover, $N_n/n\to\pg$ almost surely. Combined with joint random linear coding, the rule achieves the capacity region in~\eqref{eq:capacity}.
\end{theorem}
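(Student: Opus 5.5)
The plan is to control $q_t$ through the Bayes error of the binary hypothesis test $H=0$ versus $H=1$ based on the history $(S^{t-1},E^{t-1},V)$, and then to bound that Bayes error by a Bhattacharyya (affinity) coefficient. The posterior rule picks the port that is a posteriori more likely to be good. With ties resolved fairly or deterministically, the wrong-port probability at time $t$ is therefore exactly the Bayes error $\E\min\{w_{t-1},1-w_{t-1}\}$ (at a tie both choices err with probability $1/2$). The elementary inequality $\min\{a,b\}\le\sqrt{ab}$ then gives
\[
 q_t\le \E\sqrt{w_{t-1}(1-w_{t-1})}
 =\tfrac12\sum_{v,s^{t-1},e^{t-1}}\sqrt{\Prb(s^{t-1},e^{t-1},v\mid H=0)\,\Prb(s^{t-1},e^{t-1},v\mid H=1)},
\]
where the sum over $v$ is read as an integral against $P_V$. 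By~\eqref{eq:online-law}, each conditional law is a product of the indicator of the policy and the flag kernels $b_{hs_\tau}(e_\tau)$, and the indicator factor is the same under both hypotheses.

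The key step is to show that the affinity sum equals $\gamma^{t-1}$ despite adaptive port selection. I would prove this by induction on $t$, summing out $e_{t-1}$ first. For fixed past history, $s_{t-1}$ is determined by the policy, and
\[
 \sum_{e}\sqrt{b_{0s}(e)\,b_{1s}(e)}=\sqrt{\pg\pb}+\sqrt{(1-\pg)(1-\pb)}=\gamma
\]
for both $s=0$ and $s=1$. This is exactly where the swapped-port symmetry~\eqref{eq:online-affinity} enters. Peeling off one factor $\gamma$ per step and integrating over $V$ gives $q_t\le\frac12\gamma^{t-1}$, which is~\eqref{eq:online-wrong-bound}. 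This inductive factorization under adaptivity is the one step that needs care, because the port at time $\tau$ depends on earlier flags; processing the sum from the last time index backward handles it cleanly. Summing the geometric series gives~\eqref{eq:online-mistakes}. For~\eqref{eq:online-record-mean}, I would use
\[
 \Prb(E_t=1)=(1-q_t)\pg+q_t\pb=\pg-(\pg-\pb)q_t,
\]
so that $\E N_n=n\pg-(\pg-\pb)\sum_t q_t$, and then insert the mistake bound.

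For the almost-sure statement, note that the policy does not depend on $n$, so a single infinite process exists. Since $\sum_t q_t<\infty$, Borel--Cantelli gives only finitely many incorrect selections almost surely. The process $D_n=N_n-\sum_{t\le n}p_{HS_t}$ is a martingale with increments bounded by one, because conditionally on $H$ and the past, $E_t\sim\mathrm{Bernoulli}(p_{HS_t})$. The martingale strong law (or Azuma's inequality combined with Borel--Cantelli) then gives $D_n/n\to0$ almost surely. Moreover, $\sum_{t\le n}p_{HS_t}/n\to\pg$ almost surely because only finitely many terms equal $\pb$. Together these give $N_n/n\to\pg$.

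Capacity attainment follows from these facts. The converse is already supplied by~\eqref{eq:converse} in the first theorem. For achievability, take $k_n/n\to R_1+R_2<\pg$. Then $N_n-k_n\ge n\delta$ with probability tending to one. Formula~\eqref{eq:online-rank-error} together with the bound $1-F(r,k)\le\sum_{j<k}2^{j-r}\le2^{k-r}$ makes the ensemble failure probability vanish. Fixing the seed, or using a deterministic tie rule, and choosing a good deterministic matrix pair as in the first theorem completes the argument, with any split of the $k_n$ bits between the users and closure giving the boundary.
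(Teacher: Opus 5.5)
Your proposal is correct and follows essentially the same route as the paper. It bounds the wrong-port probability by the Bhattacharyya affinity of the two action--flag history laws, shows that the affinity factors as $\gamma^{t-1}$ because the causal action is common to both hypotheses and each port has flag affinity $\gamma$, and then uses $\E E_t=\pg-(\pg-\pb)q_t$, Borel--Cantelli with the martingale strong law, and the bound $1-F(r,k)<2^{k-r}$ for achievability. Your remarks that the posterior rule does not depend on $n$ (so a single infinite process underlies the almost-sure claim) and that tie-breaking coins can be absorbed into $V$ only make explicit what the paper leaves implicit.
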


\begin{proof}
First fix the independent random seed. Let $P_h^{(t)}$ be the law of the action and flag history through time $t$ under $H=h$, and let
\[
 A_t=\sum_{g_t}\sqrt{P_0^{(t)}(g_t)P_1^{(t)}(g_t)},\qquad A_0=1.
\]
At a given history, the causal rule selects the same action under either hypothesis. Since
$\sum_e\sqrt{b_{0s}(e)b_{1s}(e)}=\gamma$ for each port $s$, summing over the histories gives
\begin{equation}
 A_{t+1}
 =\sum_{g_t}\sqrt{P_0^{(t)}(g_t)P_1^{(t)}(g_t)}\,\gamma
 =\gamma A_t=\gamma^{t+1}.
 \label{eq:online-affinity-recursion}
\end{equation}
The same recursion holds for a randomized action kernel: its common conditional probabilities sum to one under the two hypotheses. Averaging over the independent seed preserves the bound below.

The posterior port choice at time $t+1$ is the maximum a posteriori estimate of the hidden state from the first $t$ observations. Under equal priors,
\[
 q_{t+1}=\frac12\sum_{g_t}
 \min\{P_0^{(t)}(g_t),P_1^{(t)}(g_t)\}
 \le\frac12 A_t=\frac12\gamma^t.
\]
Summation gives~\eqref{eq:online-mistakes}. Also,
\[
 \E(E_t\mid H,S^t,E^{t-1},V)
 =\pg-(\pg-\pb)\ind\{S_t\ne H\}.
\]
Taking expectations and summing over $t$ proves~\eqref{eq:online-record-mean}.

Because $\sum_t q_t<\infty$, the first Borel--Cantelli lemma implies that the inferior port is selected only finitely often, almost surely. On the other hand,
\[
 E_t-\bigl[\pg-(\pg-\pb)\ind\{S_t\ne H\}\bigr]
\]
is a bounded martingale difference. The strong law for such differences shows that its partial sums, divided by $n$, tend almost surely to zero. Hence $N_n/n\to\pg$.

If $k_n/n\to R_1+R_2<\pg$, choose $\delta>0$ so that
$\Prb(N_n<k_n+n\delta)\to0$. The product formula~\eqref{eq:rank} gives
\[
 1-F(r,k)\le\sum_{j=0}^{k-1}2^{j-r}<2^{k-r}
 \quad(r\ge k).
\]
Consequently,
\[
 \overline P_{\mathrm f}^{\mathrm{online}}
 \le\Prb(N_n<k_n+n\delta)+2^{-n\delta}\longrightarrow0.
\]
Here the independence of the successful positions and the random coding matrix is essential. Since the codebook-averaged failure probability tends to zero, a deterministic codebook sequence with vanishing failure probability exists. The strong converse for causal strategies, together with closure of the achievable region, establishes~\eqref{eq:capacity}.
\end{proof}

For $\pg=0.9$ and $\pb=0.4$, one has $\gamma\approx0.844949$. The theorem bounds the expected number of inferior-port selections by $3.224745$, uniformly in the block length. The corresponding expected loss of received equations, relative to knowing the better port in advance, is at most $1.612373$. These are analytical upper bounds. Learning during data transmission replaces a dedicated training interval by a bounded expected loss of equations.

A finite-block bound follows from the same argument. For integers $0\le T\le n$ and $\lambda\ge0$,
\begin{equation}
 \overline P_{\mathrm f}^{\mathrm{online}}
 \le\frac{\gamma^T}{2(1-\gamma)}
 +\Prb\!\left\{\operatorname{Bin}(n,\pg)<k+\lambda+T\right\}
 +2^{-\lambda}.
 \label{eq:online-finite-bound}
\end{equation}
To prove this, use a common sequence of independent uniform random variables to couple the observed success count with the count $B_n$ obtained by always selecting the better port. If no inferior port is selected after time $T$, then $N_n\ge B_n-T$. The probability of at least one such selection after time $T$ is bounded by
$\sum_{t=T+1}^n q_t\le\gamma^T/[2(1-\gamma)]$. Combining this estimate with the full-rank failure bound proves~\eqref{eq:online-finite-bound}. Minimizing over $T$ and $\lambda$ improves the bound, although it can remain substantially larger than an evaluation based on the full distribution of $N_n$.

\subsection{Finite-block optimization of the observation rule}

The posterior rule maximizes the probability of success in the next slot. This one-step property does not establish its optimality for block decoding: an observation also changes the information available for subsequent choices. Dynamic programming accounts for both effects.

After $t$ observations, let $w$ be the posterior probability of state $0$ and let $r$ be the number of received equations. For action $s$, set
\begin{align}
 a_s(w)&=wp_{0s}+(1-w)p_{1s},\\
 T_{s,1}(w)&=\frac{wp_{0s}}{a_s(w)},&
 T_{s,0}(w)&=\frac{w(1-p_{0s})}{1-a_s(w)}.
 \label{eq:online-belief-update}
\end{align}
Within the policy class considered here, let $W_t(w,r)$ be the largest codebook-averaged probability of unique recovery achievable with the remaining $n-t$ observations. Then
\begin{align}
 W_n(w,r)&=F(r,k),\label{eq:online-bellman-terminal}\\
 W_t(w,r)&=\max_{s\in\{0,1\}}
 \left\{a_s(w)W_{t+1}\bigl(T_{s,1}(w),r+1\bigr)
 +[1-a_s(w)]W_{t+1}\bigl(T_{s,0}(w),r\bigr)\right\}.
 \label{eq:online-bellman}
\end{align}
The policies in~\eqref{eq:online-bellman} know $n,k,\pg,\pb$ and use only the action and flag history. Under this restriction, the successful positions remain independent of the random coding matrix, and the conditional, codebook-averaged probability of recovery is $F(r,k)$. The pair $(w,r)$ is therefore a sufficient state. A policy allowed to inspect the row space of a particular codebook would require an enlarged state and a different terminal reward.

Equation~\eqref{eq:online-bellman} gives the exact finite-block optimum within this class. The blocklength-256 results below use the posterior rule; the Bellman optimum is evaluated only for the shorter blocks stated there. Switching costs, an evolving hidden state, or imperfectly observed erasure flags would change the posterior dynamics or the admissible policies and require a new analysis.

\section{Finite-Block Performance}
\subsection{Reliable payload and observation cost}

We use $(\pg,\pb)=(0.9,0.4)$, equal state probabilities, and a target block failure probability of one percent. The users carry the same number of bits, so $k$ is even. Every open-loop allocation $a=0,\ldots,n$ is evaluated. The pilot-assisted rate comparison searches even $m\le\min(96,n-2)$; comparisons at a fixed payload search all feasible even $m\le n-k$. Rates are normalized by the entire block, including pilots.

Table~\ref{tab:main} gives the principal comparison at $n=256$. The fixed, open-loop, and pilot-assisted values follow from finite sums, without Monte Carlo sampling. For the online policy, the distribution of $N_n$ is estimated from independent validation blocks and averaged analytically over the random code matrix using~\eqref{eq:online-rank-error}. Thus the first three columns are exact ensemble calculations up to numerical precision, while the last column contains sampling estimates.

\begin{table}[htbp]
\centering\small
\caption{Performance at $n=256$ and a one-percent ensemble failure target. The online column is estimated from $10^6$ independent validation blocks. All protocols use 512 transmitted symbols and 256 observations.}
\label{tab:main}
\setlength{\tabcolsep}{5pt}
\begin{tabular}{lrrrr}
\toprule
Quantity & Fixed & Open-loop & Pilots & Online\\
\midrule
Total message bits $k$ & 84 & 148 & 196 & 214\\
Net sum rate $k/n$ & 0.328125 & 0.578125 & 0.765625 & 0.835938\\
Block failure probability & 0.007569 & 0.005436 & 0.007362 & 0.004871\\
Goodput (bits/use) & 0.325641 & 0.574983 & 0.759988 & 0.831866\\
Pilot uses & 0 & 0 & 20 & 0\\
Mean retained data equations & 166.400 & 166.400 & 211.769 & 229.609\\
1\% quantile of retained equations & 86 & 151 & 200 & 217\\
Residual message entropy (bits) & 0.016257 & 0.008884 & 0.545659 & 0.007046\\
Observations per delivered bit & 3.0709 & 1.7392 & 1.3158 & 1.2021\\
Mean port switches & 0.0000 & 1.0000 & 1.5000 & 1.0563\\
\bottomrule
\end{tabular}
\end{table}

The pilot scheme uses 20 observations for training and carries 196 bits, a 32.43\% increase over the optimized open-loop ensemble's 148 bits. Online selection carries 214 bits, increasing this benchmark by 44.59\% and the pilot payload by 9.18\%. These are finite-block comparisons within the stated ensemble and unique-recovery criterion. The asymptotic capacity increase in~\eqref{eq:gain} is a different quantity.

For context, output counting gives an upper bound of 152 bits for any open-loop code at the same one-percent error target, under the even-$k$ convention. The corresponding bounds are 200 bits for the balanced-pilot, single-data-port class and 218 bits for all causal protocols. The 196-bit ensemble calculation therefore implies the existence of a deterministic pilot-assisted code exceeding the arbitrary-code open-loop upper bound by 28.95\%. The online result is four bits below the causal converse; the exact finite-block optimum is not determined.

We define goodput as $T=(k/n)(1-\overline P_{\mathrm f})$, assigning zero delivered bits to a failed block. Online goodput is estimated at $0.831866$ bits per use, 44.68\% above the open-loop value. Since the transmission and observation counts are fixed, bits delivered per transmitted symbol equal $T/2$, and bits delivered per observation equal $T$. The users have equal nominal rates and equal goodput; their Jain index is identically one. All blocks last 256 channel uses. Switching counts are reported, but the model assigns them zero cost and imposes no hard switching limit on the online policy.

\begin{figure}[htbp]
\centering
\includegraphics[width=\linewidth]{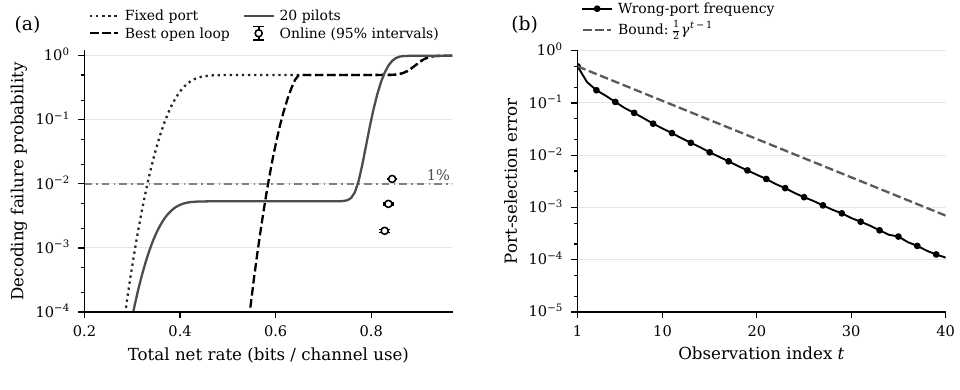}
\caption{Finite-block reliability and state learning for $(\pg,\pb)=(0.9,0.4)$. (a) Exact ensemble failure curves for fixed observation, payload-optimized open-loop allocation, and a fixed 20-pilot scheme at $n=256$. Online points and simultaneous 95\% empirical Bernstein intervals come from independent validation; the dotted horizontal line marks one percent. (b) Observed wrong-port frequency during online selection and the analytical bound $\gamma^{t-1}/2$.}
\label{fig:performance}
\end{figure}

\subsection{Independent validation and fixed-code decoding}

For the online protocol, $200{,}000$ blocks were used to choose the candidate payloads 212, 214, and 216 bits. These three candidates were fixed before a separate set of $L=10^6$ validation blocks was examined. For a candidate $k$, each block contributes the bounded quantity
\[
 Z_i=1-F(N_n^{(i)},k)\in[0,1].
\]
This averages out code-matrix randomness exactly; the sampling variation comes from the state and causal erasure process. Let $\widehat e$ and $\widehat v$ be the sample mean and unbiased sample variance. The empirical Bernstein inequality of Maurer and Pontil~\cite{MaurerPontil2009}, applied to both tails of each of $J=3$ candidates, gives the radius
\begin{equation}
 b_L=\sqrt{\frac{2\widehat v\ln(4J/\delta)}{L}}
 +\frac{7\ln(4J/\delta)}{3(L-1)},\qquad \delta=0.05.
 \label{eq:validation-radius}
\end{equation}
Intersecting $[\widehat e-b_L,\widehat e+b_L]$ with $[0,1]$ yields simultaneous coverage of at least 95\% for the three ensemble failure probabilities.

The resulting intervals, in percent, are $[0.1749,0.1956]$ for 212 bits, $[0.4702,0.5040]$ for 214 bits, and $[1.1677,1.2220]$ for 216 bits. The largest validated candidate meeting the target is therefore 214 bits. Other entries in the online column of Table~\ref{tab:main} are descriptive estimates and are not assigned these simultaneous intervals.

A separate experiment fixes a public code matrix, generates actual messages, and performs binary elimination. At 214 bits, the online receiver failed in 16 of 4000 transmissions: a frequency of 0.400\%, with nominal 95\% Wilson interval $[0.246\%,0.649\%]$. Every uniquely recovered message was correct. The pilot-assisted fixed code at 196 bits failed in 40 of 4000 transmissions, with interval $[0.735\%,1.359\%]$. The latter trial does not establish that this particular matrix meets the one-percent target, despite the lower exact ensemble average. Fixed-code trials and ensemble validation estimate different probabilities.

\subsection{Equal payloads and finite-horizon optimization}

Maximizing payload at a prescribed reliability and minimizing failure for a prescribed payload need not select the same observation schedule. At $k=196$, the best open-loop choice is a fixed port, which succeeds essentially only in the good state. At the one-percent payload optimum $k=148$, equal scanning is best. Table~\ref{tab:common} also shows that block failure and residual entropy can order the protocols differently.

\begin{table}[htbp]
\centering\small
\caption{Exact ensemble comparisons at a common payload, $n=256$. The open-loop allocation and the optimized pilot length are chosen separately for each $k$.}
\label{tab:common}
\setlength{\tabcolsep}{6pt}
\begin{tabular}{rlrrr}
\toprule
$k$ & Observation protocol & Block failure & Goodput & Residual entropy\\
&&& (bits/use)& (bits)\\
\midrule
148 & Optimized open-loop & 0.005436 & 0.574983 & 0.008884\\
148 & 20 pilots & 0.005346 & 0.575035 & 0.286526\\
148 & 62 pilots & $4.216\times10^{-6}$ & 0.578123 & 0.000194\\
\addlinespace
196 & Optimized open-loop & 0.500000 & 0.382812 & 46.800000\\
196 & 20 pilots (optimized) & 0.007362 & 0.759988 & 0.545659\\
\bottomrule
\end{tabular}
\end{table}

At 148 bits, 20 pilots slightly reduce block failure relative to scanning but leave more residual entropy. A mistaken port decision may lose many message equations, while an unsuccessful scanned block is usually closer to full rank. At the lighter payload of 84 bits, scanning has failure about $4.45\times10^{-21}$ and is substantially better than the fixed 20-pilot scheme. Increasing the mean number of records therefore does not ensure better reliability for every payload. The best pilot length depends on the decoding objective.

The same training tradeoff appears with blocklength. At lengths 64, 128, 256, and 512, the optimized open-loop ensemble supports rates $0.468750$, $0.546875$, $0.578125$, and $0.601562$ bits per use at one-percent failure. Within the stated pilot search, the corresponding rates are $0.468750$, $0.671875$, $0.765625$, and $0.824219$. The training cost erases the rate advantage at length 64 but becomes less important in longer blocks. The full parameter sweeps, including unequal known state priors and smaller port contrasts, are retained with the numerical data.

Finally, the Bellman recursion~\eqref{eq:online-bellman} was evaluated for all positive even payloads at $n=8,12,16,20$, giving 28 cases. Histories were represented by each port's visit and retention counts, four integers that determine the posterior. An independent forward probability calculation checked the resulting policies. Representative values appear in Table~\ref{tab:dp}.

\begin{table}[htbp]
\centering\small
\caption{Short-block ensemble failure probabilities. Open-loop allocation and pilot length are optimized at each payload. The Bellman policy is optimal within the erasure-history, codebook-independent policy class.}
\label{tab:dp}
\begin{tabular}{rrrrrr}
\toprule
$(n,k)$ & Fixed & Open-loop & Pilots & Posterior & Bellman\\
\midrule
$(8,4)$ & 0.466104 & 0.412563 & 0.466104 & 0.204104 & 0.204104\\
$(12,8)$ & 0.574353 & 0.574353 & 0.574353 & 0.286315 & 0.286315\\
$(16,10)$ & 0.521769 & 0.521769 & 0.396858 & 0.134119 & 0.134119\\
$(20,12)$ & 0.500774 & 0.489470 & 0.307519 & 0.059298 & 0.059298\\
\bottomrule
\end{tabular}
\end{table}

Across the 28 cases, the initial success values for the Bellman and posterior rules differ by at most $5.6\times10^{-17}$, and no strictly better nongreedy action was found within numerical precision. At $n=20$, each payload requires 10,626 cached states. These calculations show agreement for the tested instances; the blocklength-256 results use the posterior rule and make no claim of finite-horizon global optimality.

\section{Conclusion}

For the mixed binary multiple-access channel considered here, reliable transmission is limited by both the physical mixture and the receiver's admissible observation policy. A fixed port leaves the bad state as a persistent obstruction. Equal scanning balances the states. Causal observation learns which port is favorable and attains the good-port region, with the same number of transmitted symbols and acquired samples. The exact sum capacities $\pb$, $(\pg+\pb)/2$, and $\pg$ quantify these three possibilities at error thresholds below one half.

The learning cost need not occupy separate transmission uses. Erasure flags provide state evidence while retained bits provide message equations. The posterior rule incurs only a bounded expected number of incorrect port selections and is capacity-achieving without pilots. At finite blocklength, rank-based achievability and output-counting converses distinguish the performance of a particular code ensemble from the limit over all codes. The numerical results place the online scheme close to that converse in the stated example, while leaving its exact finite-block optimum open.

The physical--observation distinction is consequential because control precedes record formation. No subsequent processing can recover an erased sample, but a causal observation choice can improve the probability that later samples are retained. For this channel, the observation constraint is therefore part of the capacity problem itself.

Two extensions would alter the communication problem substantially: a state that evolves within the block, and observation actions with switching or quality-dependent costs. Phase-sensitive receivers introduce a further issue, the comparison of records obtained under different observation states; the phase-transport formulation~\cite{MengPhase2026} provides a geometric framework for that question. Extending the capacity calculation would require coupling those record comparisons to an explicit communication kernel and its observation costs.

\paragraph{Data and code availability.}
The accompanying source package contains the simulation programs, fixed code matrices, parameter files, seeds, and numerical results. Compilation instructions and detailed reproduction commands are provided in its README.

\end{document}